\newtheorem{thm}{Theorem}
\newtheorem{lem}{Lemma}
\newcommand{\manifold}{\mathcal{M}}
\newcommand{\lightcone}{\mathcal{H}}
\begin{document}
 \title{On the future of solutions to the massless Einstein-Vlasov system in a Bianchi I cosmology}

\author[1]{Ho Lee\footnote{holee@khu.ac.kr}}
\author[2,3]{Ernesto Nungesser\footnote{ernesto.nungesser@icmat.es}}
\author[4]{Paul Tod\footnote{tod@maths.ox.ac.uk}}
\affil[1]{Department of Mathematics and Research Institute for Basic Science, Kyung Hee University, Seoul, 02447, Republic of Korea}
\affil[2]{Instituto de Ciencias Matem\'{a}ticas (CSIC-UAM-UC3M-UCM), 28049 Madrid, Spain}
\affil[3]{Department of Mathematics, Universidad Polit\'{e}cnica de Madrid, 28040 Madrid, Spain}
\affil[4]{Mathematical Institute, University of Oxford, Oxford OX2 6GG}

\maketitle
\begin{abstract}
We show that massless solutions to the Einstein-Vlasov system in a Bianchi I space-time with small anisotropy, i.e. small shear and small trace-free part of the spatial energy momentum tensor, tend to a radiation fluid in an Einstein-de Sitter space-time with the anisotropy $\Sigma^a_b\Sigma^b_a$  and $\tilde{w}^i_j \tilde{w}^j_i$ decaying as $O(t^{-\frac12})$.
 \end{abstract}
 
 \section{Introduction}
 The massless and massive Einstein-Vlasov equations have been objects of intensive study in mathematical general relativity at least since Rendall drew attention to them in the early 1990's \cite{R0} (see the review \cite{and} for material to 2011). More recently, the stability of Minkowski space among solutions of massless \cite{tay} and massive \cite{fjs,LT} Einstein-Vlasov have been shown, and the methods of Friedrich \cite{HF} have been extended to establish existence for massless Einstein-Vlasov with data at space-like future-null-infinity $\mathcal{I}$ in the presence of positive cosmological constant ($\Lambda>0$) or data on an asymptotically hyperbolic initial surface with $\Lambda=0$, \cite{jtk}.
 
 Massless Einstein-Vlasov has better behaviour under conformal rescaling than massive. This was exploited in \cite{jtk}, and in the opposite direction in \cite{A3} and \cite{AT2} to establish local existence with data given at a \emph{conformal gauge} (or \emph{isotropic}) cosmological singularity. In \cite{HL} long-time existence and asymptotic behaviour for massive Einstein-Vlasov with $\Lambda>0$ in the spatially homogeneous setting was established, and this was extended to the massless case by \cite{T}. The late-time asymptotics for massive homogeneous solutions of the Einstein-Vlasov system are well studied close to self-similar and non self-similar solutions \cite{FH,LN2,E4}. In \cite{R1} a set of conditions was given for long-time existence of various matter models with spatial homogeneity, and massive Vlasov was identified as a case satisfying these conditions. It is possible to verify that massless Vlasov satisfies conditions (1),(2) and (7) in \cite{R1}, so that this model should also allow a proof of long-time existence even if $\Lambda=0$. That is what we consider in this article, in the first instance restricting to Bianchi type I. Note that this is not a special case of \cite{jtk} as here there is no assumption of asymptotic flatness. However this assumption means that, with the restriction to a spatially-homogeneous distribution function, the Vlasov equation is automatically solved by a distribution function $f(p_1,p_2,p_3)$ where the $p_i$ are conserved quantities for the geodesic equation obtained from the three translational Killing vectors. Assuming that the anisotropy of both metric and energy-momentum tensor is small we are able to show that the space-time tends to a radiation fluid in an Einstein-de Sitter space-time. In other words the space-time isotropises in the metric and the energy-momentum tensor. These results are made precise in Theorem 1 in Section 3. That this happens was already shown for Bianchi I LRS  \cite{RT} and Bianchi I reflection symmetric space-times \cite{R,HU1}. In \cite{BFH} decay rates for the reflection symmetric case were obtained and a formalism developed which will likely to be useful in the non-homogeneous case.
 Here we extend the former results to the case where no additional symmetries other than Bianchi I are imposed and we also obtain the optimal decay rate at which this happens.

 \section{The massless Einstein-Vlasov system}
In this section we introduce the massless Einstein-Vlasov system. Consider a four-dimensional oriented and time oriented Lorentzian manifold $(\manifold, {^4g})$ and a distribution function $f$, then the massless Einstein-Vlasov system is written as
\begin{align*}
G_{\alpha\beta}&= T_{\alpha \beta},\\
\mathcal{L} f&=0,
\end{align*}
where $\mathcal{L}$ is the Liouville operator, $G_{\alpha\beta}$ is the Einstein tensor and $T_{\alpha\beta}$ is the energy-momentum tensor defined by
\begin{align*}
T_{\alpha\beta}=  \int_{\lightcone 
\setminus  \{0\} } f p_{\alpha} p_{\beta}\omega_p.
\end{align*}
Here, the integration is over the future pointing light-cone $\lightcone$ at a given space-time point which is defined by
\[
p_{\alpha} p_{\beta}g^{\alpha\beta}=0, \ p^0>0
\]
with the apex removed, and $\omega_p$ is the Lorentz invariant measure on the light-cone of $p$. The basic equations we shall use can be found in Sections 7.3--7.4 and Chapter 25 of \cite{Ring}. We also refer to this book for an introduction to the Einstein-Vlasov system. Let $\Sigma$ be a spacelike hypersurface in $\manifold$ with $n$ its future directed unit normal. We define the second fundamental form as $k(X,Y)=\langle \nabla_X n, Y\rangle$ for vectors $X$ and $Y$ tangent to $\Sigma$, where $\nabla$ is the Levi-Civita connection of $^4g$. Here and throughout the paper we assume that Greek letters run from $0$ to $3$, while Latin letters from $1$ to $3$, and also follow the sign conventions of \cite{Ring}.

The Hamiltonian and momentum constraint equations are, as usual:
\begin{align*}
&\overline{R}-{k}_{ij} {k}^{ij}+ {k}^2=2 \rho,\\
&\overline{\nabla}^j {k}_{ji}-\overline{\nabla}_i  {k}= -{J}_i,
\end{align*}
where ${g}$ is the induced metric on $\Sigma$, $k=k_{ab}g^{ab}$ the trace of the second fundamental form, $\overline{R}$ and $\overline{\nabla}$ the scalar curvature and the Levi-Civita connection of ${g}$ respectively, and matter terms are given by $\rho= T_{\alpha\beta}n^{\alpha}n^{\beta}$ and $J_i X^i=- T_{\alpha\beta}n^{\alpha}X^{\beta}$ for $X$ tangent to $\Sigma$.

\subsection{The massless Einstein-Vlasov system with Bianchi I symmetry}

 The metric of a Bianchi spacetime in a left-invariant frame is written as
\begin{align*}
^4 g =-dt \otimes dt + g_{ij}(t)\xi^i \otimes \xi^j
\end{align*}
on $\manifold=I \times G$ with $e_0$ future oriented. We will need equations (25.17)--(25.18) of \cite{Ring} (without scalar field) with the notation $T_{ab}=S_{ab}$ and since the 3-metric is flat for Bianchi I with $R_{ab}=R=0$:
\begin{align*}
&\dot{g}_{ab}=2k_{ab}, \\
&\dot{k}_{ab}=2 k^i_a k_{bi} -k\, k_{ab}+S_{ab},
\end{align*}
where the dot means the derivative with respect to time $t$. Note that in the massless case $g^{ab} S_{ab}=\rho$.
Since $k_{ab}=k_{ab}(t)$ the constraint equations are as follows: 
\begin{align}
&2\rho=-{k}_{ij} {k}^{ij}+ {k}^2,\label{CE1} \\
&{J}_i=0.
\end{align}
The time-derivative of the mixed version of the second fundamental form is given by
\begin{align}\label{MV}
\dot{k}^a_b= -k\,k^a_b +S^a_b,
\end{align}
and by taking the trace of (\ref{MV}) we have
\begin{align}\label{im}
\dot k=-k^2+\rho.
\end{align}
It is convenient to express the second fundamental form as
\begin{align*}
k_{ab}=\sigma_{ab}+H g_{ab},
\end{align*}
where $\sigma_{ab}$ is the trace free part and $H=\frac13 k$ is the Hubble parameter. Then \eqref{im} becomes
\begin{align*}
\dot{H}=-3H^2+\frac13 \rho,
\end{align*}
and \eqref{CE1} becomes, introducing $\Omega$:
\begin{align}\label{hconstraint}
\Omega:= \frac{\rho}{3H^2}= 1-\frac16 F,
\end{align}
where $F=\frac{\sigma_{ab}\sigma^{ab}}{H^2}$. In terms of the trace free part \eqref{MV} transforms into
\begin{align*}
\dot{\sigma}^a_b= -3H \sigma^a_b  +S^a_b-(3H^2+\dot{H})\delta^a_b= -3H \sigma^a_b  +\pi^a_b,
\end{align*}
where $\pi^a_b$ is the trace free part of $S^a_b$. By \eqref{CE1} one can eliminate the energy density such that (\ref{im}) reads:
\begin{align}\label{in}
\dot k= -\frac12 k^2-\frac12 k_{ij}k^{ij}.
\end{align}
Using the trace free part of $k_{ab}$, the Hubble variable and \eqref{hconstraint} we obtain
\begin{align}\label{evH}
\frac{d}{dt}\left(\frac{1}{H}\right)=-\frac{\dot{H}}{H^2}=2 + \frac16 F=3-\Omega.
\end{align}
It is convenient to introduce a dimensionless time variable $\tau$ as follows:
\begin{align}\label{deftau}
    \frac{dt}{d\tau}=H^{-1},
\end{align}
and denote derivation with respect to that variable by a prime.  Sometimes it is also useful to use the variable $q$:
\begin{align*}
    q=-1-\frac{\dot{H}}{H^2}= 1+\frac16 F,
\end{align*}
where we have used \eqref{evH} in the last equation.
The evolution equation for $\Sigma_a^b=\frac{\sigma_a^b}{H}$ is
\begin{align*}
({\Sigma}^a_b)'=-\left(3+\frac{\dot{H}}{H^2}\right)\Sigma^a_b+\frac{S^a_b-\frac13\delta^a_b\rho}{H^2}.
\end{align*}
Using \eqref{evH} and \eqref{hconstraint} we have
\begin{align}\label{sigma+}
({\Sigma}^a_b)'=-\Omega \left(\Sigma^a_b-3w^a_b+\delta^a_b\right),
\end{align}
with $w^a_b=\frac{S^a_b}{\rho}$.
Since $\Sigma^a_b$ is trace free it is convenient to work with $\Sigma_+$ and $\Sigma_-$ defined by
\begin{align*}
\Sigma_+=\frac{1}{2H}\left (\sigma^2_2+\sigma^3_3\right), \ \Sigma_-=\frac{1}{2\sqrt{3}H}\left (\sigma^2_2-\sigma^3_3\right).
\end{align*}
The evolution equations for $\Sigma_-$ and $\Sigma_+$ can be found in \cite{LN2}. We use the constraint equation to substitute $\frac{1}{H^2}$ in $S_\pm$ and define $w_\pm$ analogously to $\Sigma_{\pm}$ by
\begin{align*}
    w_+ = \frac{\pi^2_2+\pi^3_3}{2\rho},\ w_- = \frac{\pi^2_2-\pi^3_3}{2\sqrt{3}\rho},
\end{align*}
Our equation are thus
\begin{align}
\label{plus}&{\Sigma}_+'=(q-2)\Sigma_++3w_+\Omega,\\
\label{minus}&{\Sigma}_-'=(q-2)\Sigma_-+3w_-\Omega.
\end{align}

\subsection{The Vlasov equation with Bianchi symmetry}

Since we use a left-invariant frame, and assume spatial homogeneity, $f$ will not depend on $x^a$. Moreover for Bianchi I the components $p_a$ of momenta in the invariant frame are constants of motion for the geodesic equation, so that the distribution function can be taken to be $f(p_1,p_2,p_3)$ with no explicit time dependence. We will assume that $f$ has compact support in momentum for simplicity. Since $g_{00}=g^{00}=-1$ and $g^{0a}=0$, we have $p^0=-p_0=\sqrt{p_ap_bg^{ab}}$, $\rho=T_{00}$, and $J_{a}=-T_{0a}=0$. The frame components of the energy-momentum tensor are thus
\begin{align*}
\label{rho}&\rho=(\det g)^{-\frac12}  \int f p^0 dp,\ 
&S_{ij}=(\det g)^{-\frac12} \int f \frac{p_i p_j} {p^0}dp,
\end{align*}
where the distribution function is understood as $f=f(p)$ with $p=(p_1,p_2,p_3)$. The Vlasov equation is simply
\begin{align}
  \frac{\partial f}{\partial t} = 0.
\end{align}

Consider the derivative of the following quantity:
\begin{align}\label{w}
w^j_i = \frac{S^j_i}{\rho}=\frac{\int f p_i p_a g^{aj} (p^0)^{-1}dp}{\int f p^0 dp}.
\end{align}
The derivative of $w_i^j$ with respect to $\tau$ is
\begin{align}\label{evw0}
(w_i^j)'=-2w_i^a \Sigma^j_a+w^b_a \Sigma^a_b w_i^j+ \Sigma^{c}_d\xi_{ic}^{jd},
\end{align}
where
\begin{align}\label{xi}
\xi_{ic}^{jd}=\frac{\int f p_ip^jp_cp^d (p^0)^{-3}dp}{\int fp^0 dp}.
\end{align}
 The derivative of ${\xi}_{ic}^{jd}$ is
\begin{align}
\left({\xi}_{ic}^{jd}\right)'=-2\Sigma^d_f {\xi}_{ic}^{jf}-2\Sigma^j_f{\xi}_{ic}^{fd}+3\Sigma^a_b \frac{\int {f} p_ip^jp_cp^dp_ap^b (p^0)^{-5} dp}{\int {f} p^0 dp}+ {\xi}_{ic}^{jd} w_a^b \Sigma^a_b.
\end{align}

As one can see from the last equation, our system is not closed. The derivative of ${\xi}_{ic}^{jd}$ involves higher momentum terms.

\section{Main result}
We will consider that we are close to the case that $g_{ab}$ and $f$ are isotropic. In that case $\xi_{ic}^{jd}$ takes some specific values which we will denote by $\hat{\xi}_{ic}^{jd}$
Let us consider the trace free part $\tilde{w}^i_j=w^i_j-\frac13\delta_j^i$ and define $\tilde{\xi}_{ic}^{jd}=\xi_{ic}^{jd}-\hat{\xi}_{ic}^{jd}$. Then  \eqref{evw0}  turns into
\begin{align}
(\tilde{w}_i^j)'=-\frac23\Sigma^j_i-2\tilde{w}_i^a\Sigma^j_a+\tilde{w}^b_a \Sigma^a_b (\tilde{w}_i^j+\frac13\delta^j_i)+ \Sigma^{c}_d \left(\tilde{\xi}_{ic}^{jd}+\hat{\xi}_{ic}^{jd}\right).
\end{align}
For the diagonal terms we have
\begin{align*}
&w_+' =-\frac23 \Sigma_+- \tilde{w}_2^a\Sigma^2_a - \tilde{w}_3^a\Sigma^3_a +\tilde{w}^b_a \Sigma^a_b \left(w_++\frac13\right)+\frac12 \Sigma^{c}_d \left(\tilde{\xi}_{2c}^{2d}+\hat{\xi}_{2c}^{2d}+\tilde{\xi}_{3c}^{3d}+\hat{\xi}_{3c}^{3d}\right),\\
&w_-' =-\frac23 \Sigma_-- \frac{1}{\sqrt{3}}(\tilde{w}_2^a\Sigma^2_a - \tilde{w}_3^a\Sigma^3_a) +\tilde{w}^b_a \Sigma^a_b w_- +\frac{1}{2\sqrt{3}} \Sigma^{c}_d \left(\tilde{\xi}_{2c}^{2d}+\hat{\xi}_{2c}^{2d}-\tilde{\xi}_{3c}^{3d}-\hat{\xi}_{3c}^{3d}\right).
\end{align*}

In the isotropic case $g_{ij}=R^{2}\delta_{ij}$. The factor $R^2$ will disappear in that case in $\xi_{ic}^{jd}$  due to cancellation, which means the expression is basically an Euclidean one. Denoting by a hat the isotropic case we have 
\begin{align*}
\hat{\xi}_{ic}^{jd}= \delta^{ja}\delta^{db} \frac{\int \hat{f} p_ip_ap_cp_b \vert p \vert^{-3} dp}{\int \hat{f} \vert p \vert dp},
\end{align*}
with $\vert p \vert = \sqrt{p_a p_b \delta^{ab}}$.
For that case the different integrals which can appear are 
\begin{align*}
  &I_0=  \int \hat{f} \vert p\vert dp,\ I_1=  \int \hat{f} p_1^4 \vert p\vert ^{-3} dp=\frac15I_0, I_3= \int \hat{f} p_1^2p_2^2 \vert p\vert ^{-3} dp=\frac{I_0}{15}\\
  &I_2=  \int \hat{f} p_1^3p_2 \vert p\vert ^{-3} dp=0,\ I_4= \int \hat{f} p_1^2p_2p_3 \vert p\vert ^{-3} dp= 0.
\end{align*}
Suspending the Einstein summation convention for the next two equations, the only non-vanishing expressions are 
\begin{align*}
&\hat{\xi}_{aa}^{aa}=\frac{1}{5}, \ a=1,2,3\\ &\hat{\xi}_{ab}^{ab}=\hat{\xi}_{ab}^{ba}=\hat{\xi}_{aa}^{bb}=\frac{1}{15}, a\neq b, a,b=1,2,3.
\end{align*}

To summarize, our system is
\begin{align}
\label{evsigma}&({\Sigma}^a_b)'=-\Omega \left(\Sigma^a_b-3\tilde{w}^a_b\right),\\
\label{evw}&(\tilde{w}_i^j)'=-\frac23\Sigma^j_i-2\tilde{w}_i^a\Sigma^j_a+\tilde{w}^b_a \Sigma^a_b (\tilde{w}_i^j+\frac13\delta^j_i)+ \Sigma^{c}_d \left(\tilde{\xi}_{ic}^{jd}+\hat{\xi}_{ic}^{jd}\right).
\end{align}
For the diagonal terms we have
\begin{align}
    &{\Sigma}_+'=\Omega \left(3w_+-\Sigma_+\right),\\
    &{\Sigma}_-'=\Omega \left(3w_--\Sigma_-\right).
\end{align}

\begin{lem}\label{basic}
Consider the massless Einstein-Vlasov system with Bianchi I symmetry. Suppose $\Sigma^a_b(\tau_0)=\epsilon_\Sigma$, $\tilde{w}_i^j(\tau_0)=\epsilon_w$, and $\tilde{\xi}_{ic}^{jd}(\tau_0)=\epsilon_\xi$  are small initially. Then the following estimates hold:
\begin{align}
\Sigma^a_b=O(\epsilon e^{-\frac12\tau}),\\
\tilde{w}_i^j=O(\epsilon e^{-\frac12\tau}),
\end{align}
where $\epsilon$ only depends on $\epsilon_\Sigma$ and $\epsilon_w$.
\end{lem}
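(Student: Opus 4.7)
The plan is to treat (\ref{evsigma})--(\ref{evw}) as a small perturbation of a constant-coefficient linear ODE and read the decay rate off the spectrum. First I would use $\Omega=1-\tfrac{1}{6}F=1+O(|\Sigma|^{2})$ to replace $\Omega$ by $1$ in (\ref{evsigma}) up to a quadratic error, and similarly move the explicit $\tilde w\,\Sigma$ products in (\ref{evw}) and the coupling $\Sigma^{c}_{d}\tilde\xi^{jd}_{ic}$ into a nonlinear remainder $N(X,\tilde\xi)$, where $X=(\Sigma^{a}_{b},\tilde w^{j}_{i})$. The crucial algebraic input is the identity $\Sigma^{c}_{d}\hat\xi^{jd}_{ic}=\tfrac{2}{15}\Sigma^{j}_{i}$, which I would derive by substituting the listed values $\hat\xi^{aa}_{aa}=\tfrac15$ and $\hat\xi^{ab}_{ab}=\hat\xi^{ba}_{ab}=\hat\xi^{bb}_{aa}=\tfrac1{15}$ ($a\neq b$) together with $\Sigma^{c}_{c}=0$ and the leading-order symmetry $\Sigma^{i}_{j}=\Sigma^{j}_{i}$. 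With this identity the linearization decouples into identical $2\times 2$ blocks on each independent traceless mode, for example on $(\Sigma_{\pm},w_{\pm})$, governed by
\[
A=\begin{pmatrix}-1 & 3\\[2pt] -\tfrac{8}{15} & 0\end{pmatrix},\qquad\text{since}\qquad -\tfrac{2}{3}+\tfrac{2}{15}=-\tfrac{8}{15}.
\]

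Next I would compute the spectrum of $A$: the characteristic polynomial is $\lambda^{2}+\lambda+\tfrac{8}{5}$ with roots $\lambda_{\pm}=-\tfrac{1}{2}\pm\tfrac{i}{2}\sqrt{27/5}$, both of real part exactly $-\tfrac{1}{2}$. This is the rate stated in the lemma, and because the $\tilde w$-equation has no $\tilde w$ on its right-hand side it cannot be improved. Since the two eigenvalues are simple and complex conjugate, the spectral estimate $|e^{\tau A}|\le Ce^{-\tau/2}$ is sharp with no polynomial loss.

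The final step is nonlinear closure by a bootstrap. Setting $\epsilon:=\epsilon_{\Sigma}+\epsilon_{w}$ and assuming $|X(\tau)|\le M\epsilon\,e^{-\tau/2}$ on $[\tau_{0},T]$ for a constant $M$ to be fixed, I would first control $\tilde\xi$ directly from its evolution equation displayed in Section~2.2: every term on its right-hand side carries an explicit factor of $\Sigma$, so
\[
|\tilde\xi(\tau)-\tilde\xi(\tau_{0})|\le C\int_{\tau_{0}}^{\tau}|\Sigma(s)|\,ds\le 2CM\epsilon,
\]
which keeps $\tilde\xi$ of size $O(\epsilon_{\xi}+\epsilon)$ throughout without ever needing to be fed back into the bootstrap constants. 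Combined with the genuinely quadratic structure of the remaining terms, this yields the pointwise bound $|N(X,\tilde\xi)|\le C\epsilon|X|$. Inserting this into Duhamel's formula $X(\tau)=e^{(\tau-\tau_{0})A}X(\tau_{0})+\int_{\tau_{0}}^{\tau}e^{(\tau-s)A}N(s)\,ds$ and invoking $|e^{\tau A}|\le Ce^{-\tau/2}$ improves the bootstrap bound for sufficiently small $\epsilon$ and propagates it to all $\tau\ge\tau_{0}$.

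The principal obstacle is precisely this coupling to $\tilde\xi$: the moment hierarchy does not close, $\tilde\xi$ is not one of the dynamical unknowns, and a priori it might grow. What rescues the argument is that the $\xi$-equation is driven entirely by $\Sigma$, so integrability of the single function $|\Sigma|\le M\epsilon e^{-\tau/2}$ suffices to propagate smallness of $\tilde\xi$. A secondary subtlety is that $-\tfrac{1}{2}$ is the \emph{exact} spectral bound, so no logarithmic or polynomial loss can be afforded in the Duhamel step; I expect to lean on the sharpness of $|e^{\tau A}|\le Ce^{-\tau/2}$ (guaranteed by the simple complex-conjugate spectrum of $A$), or, equivalently, to construct a tailored quadratic Lyapunov functional $V=a\Sigma^{2}+2b\Sigma\tilde w+c\tilde w^{2}$ with $V'\le-(1-C\epsilon)V$, from which the stated decay rate follows by direct integration.
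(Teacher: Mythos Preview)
Your strategy is the paper's: linearize \eqref{evsigma}--\eqref{evw} at the isotropic fixed point, read off the spectral rate $-\tfrac12$, keep $\tilde\xi$ small by integrating its $\Sigma$--driven evolution, and close by a bootstrap/Lyapunov argument. The paper runs the bootstrap first at the looser rate $e^{-\tau/4}$ and then does a second loop to reach $e^{-\tau/2}$, whereas you bootstrap directly at $e^{-\tau/2}$; either works, since $e^{-\tau/2}$ is already integrable for the $\tilde\xi$ bound.

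There is one substantive discrepancy. Your identity $\Sigma^{c}_{d}\hat\xi^{jd}_{ic}=\tfrac{2}{15}\Sigma^{j}_{i}$ is correct on the diagonal but for $i\neq j$ the contraction is $\tfrac{1}{15}(\Sigma^{j}_{i}+\Sigma^{i}_{j})$, so it only reduces to $\tfrac{2}{15}\Sigma^{j}_{i}$ under $\Sigma^{i}_{j}=\Sigma^{j}_{i}$, which is not among the hypotheses. The paper therefore does \emph{not} collapse the off-diagonal sector into copies of your $2\times2$ block $A$; it keeps the genuine $4\times4$ system in $(\Sigma^{j}_{i},\Sigma^{i}_{j},\tilde w^{j}_{i},\tilde w^{i}_{j})$, obtaining---besides your eigenvalues $\tfrac{1}{10}(-5\pm 3i\sqrt{15})$---the additional pair $\tfrac12(-1\pm i\sqrt7)$. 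Since these also have real part $-\tfrac12$, your conclusion survives, but the single--block reduction is not justified as written. Your ``leading-order symmetry'' would hold if the metric anisotropy were assumed small (then $\Sigma^{i}_{j}-\Sigma^{j}_{i}=O(\epsilon|\Sigma|)$ and can be pushed into $N$), but the lemma does not assume this and the paper does not use it. A smaller point: your Lyapunov inequality $V'\le-(1-C\epsilon)V$ yields $e^{-(1-C\epsilon)\tau/2}$, not the sharp $e^{-\tau/2}$; the paper removes that residual $\epsilon$ by an explicit second loop, which you should also plan on.
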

\begin{proof}
Let us assume for an interval $[\tau_0,\tau_1)$ that
\begin{align}
    &\label{boot1}\Sigma^a_b=\epsilon e^{-\frac14\tau},\\
&\label{boot2}\tilde{w}_i^j=\epsilon e^{-\frac14\tau}.
\end{align}

From \eqref{xi} we have the following bound using the fact the higher order momenta terms are bounded by $\rho$ since $p_ap^b\leq C(p^0)^2$:
\begin{align}\label{bound1}
\frac{\int {f} p_ip^jp_cp^dp_ap^b (p^0)^{-5} dp}{\int {f} p^0 dp} \leq C \frac{\int {f} p_ip^jp_cp^d (p^0)^{-3} dp}{\int {f} p^0 dp} = C {\xi}_{ic}^{jd}.
\end{align}
Similarly ${\xi}_{ic}^{jd}$ is also bounded since
\begin{align}\label{bound2}
    {\xi}_{ic}^{jd} \leq C \frac{\int {f} p_ip^j(p^0)^{-1} dp}{\int {f} p^0 dp}\leq C.
\end{align}

Using assumption \eqref{boot1} for $\Sigma^a_b$ and \eqref{bound1}-\eqref{bound2} that
\begin{align*}
\left({\xi}_{ic}^{jd}\right)'\leq C \vert \Sigma^a_b \vert \leq \epsilon e^{-\frac14\tau},
\end{align*}
for all indices. The same bound holds for $\left({\tilde{\xi}}_{ic}^{jd}\right)'$. Integrating we obtain that
\begin{align} \label{bound3}
{\tilde{\xi}}_{ic}^{jd}(\tau) \leq {\tilde{\xi}}_{ic}^{jd}(\tau_0)+C \epsilon.
\end{align}
This means that this quantity does not necessarily get smaller, but we can bound it by a small quantity.

If we focus on the first order terms in \eqref{evsigma} and \eqref{evw} we have that
\begin{align}
&({\Sigma}^a_b)'=-\Sigma^a_b+3\tilde{w}^a_b+O_{\Sigma},\\
&(\tilde{w}_i^j)'=-\frac23\Sigma^j_i+ \Sigma^{c}_d \hat{\xi}_{ic}^{jd}+O_{w},
\end{align}
with 
\begin{align*}
&O_{\Sigma}=\frac16\Sigma^a_b\Sigma^b_a\left(\Sigma^a_b-3\tilde{w}^a_b\right),\\
&O_{w}=-2\tilde{w}_i^a\Sigma^j_a+\tilde{w}^b_a \Sigma^a_b (\tilde{w}_i^j+\frac13\delta^j_i)+\Sigma^{c}_d \tilde{\xi}_{ic}^{jd}.
\end{align*}
The diagonal terms for $\Sigma^a_b$ can be expressed by $\Sigma_{\pm}$, so that
\begin{align}
    &{\Sigma}_+'=-\Sigma_+ + 3w_+ + O_{\Sigma},\\
    &{\Sigma}_-'=-\Sigma_- + 3w_- + O_{\Sigma}.
\end{align}

Since most of the quantities $\hat{\xi}_{ic}^{jd}$ vanish we have simple expressions. The expressions can be separated into diagonal and non diagonal terms.
For $w_{\pm}$ we obtain
\begin{align}
    &w_+'=-\frac{8}{15}\Sigma_+ + O_{w},\\
    &w_-'=-\frac{8}{15}\Sigma_- + O_{w}.
\end{align}
This means we have the same system for $\Sigma_{\pm}w_{\pm}$, namely

\begin{align}
\left(
\begin{matrix}
\Sigma_{\pm}\\
w_{\pm}
\end{matrix}
  \right)'
  =
  \left(
\begin{matrix}
-1 & 3 \\
-\frac{8}{15} & 0 
\end{matrix}
  \right)
  \left(
\begin{matrix}
\Sigma_{\pm}\\
w_{\pm}
\end{matrix}
  \right)
  +   \left(
\begin{matrix}
O_{\Sigma}\\
O_w
\end{matrix}
  \right),
\end{align}
The non-diagonal terms satisfy for any $i\neq j$ 
\begin{align}
\left(
\begin{matrix}
\Sigma^j_i\\
\Sigma^i_j\\
\tilde{w}^j_i\\
\tilde{w}^i_j
\end{matrix}
  \right)'
  = 
  \left(
\begin{matrix}
-1 & 0 & 3 & 0\\
0 & -1 & 0 &3 \\
-\frac{9}{15} & \frac{1}{15} & 0 & 0 \\
\frac{1}{15} & -\frac{9}{15} & 0 & 0 
\end{matrix}
\right)
\left(
\begin{matrix}
\Sigma^j_i \\
\Sigma^i_j \\
\tilde{w}^j_i\\
\tilde{w}^i_j\\
\end{matrix}
  \right)
  +
\left(
\begin{matrix}
O_{\Sigma}\\
O_{\Sigma}\\
O_w\\
O_w
\end{matrix}
  \right).
\end{align}

 As one can see from this, except for the error term, the diagonal and the non-diagonal part are independent.
 
 Consider the diagonal case first. The eigenvalues of the linearised system are:
 \begin{align*}
   &  \lambda = \frac{1}{10}(-5 \pm i 3\sqrt{15}).
 \end{align*}

Making the canonical transformation to
\begin{align}
\mathbf{Y}=
  \left( 
  \begin{matrix}
  \bar{\Sigma}_{\pm}\\
  \bar{w}_{\pm}
  \end{matrix}
  \right) = P^{-1} \left( 
  \begin{matrix}
  {\Sigma}_{\pm}\\
  {w}_{\pm}
  \end{matrix}
  \right),
\end{align}
with
\begin{align}
P^{-1}= \left(
\begin{matrix}
-\frac{16}{135}\sqrt{15} & \frac19 \sqrt{15}\\
0 &1
\end{matrix}
\right)
\end{align}
we obtain
\begin{align}
    \mathbf{Y}'= \frac{1}{10} \left( \begin{matrix} -5 & -3\sqrt{15}\\
    3\sqrt{15} & -5\end{matrix}\right) \mathbf{Y} +\left(
\begin{matrix}
O_{\Sigma}+ O_w\\
O_w
\end{matrix}
  \right).
\end{align}
Defining
\begin{align*}
    R= \Vert {\bf Y} \Vert^2,
\end{align*}
we obtain
\begin{align*}
R' = -R + \|\mathbf{Y} \|(O_w + O_{\Sigma}).
\end{align*}
Introducing spherical coordinates for $\bar{\Sigma}_{\pm}$ and $\bar{w}_{\pm}$ we obtain the desired estimate for the diagonal components up to an epsilon, since the error term $\|\mathbf{Y} \|(O_w + O_{\Sigma})$ is of third order and we can choose the initial value for ${\xi}_{ic}^{jd}(\tau_0)$ smaller than $\epsilon_\Sigma$. Choosing $\epsilon_\Sigma$ and $\epsilon_w$ small enough we have improved the bootstrap assumptions \eqref{boot1}-\eqref{boot2} for the diagonal components. For the non-diagonal components we obtain the eigenvalues
\begin{align*}
   & \lambda= \frac12(-1 \pm i \sqrt{7}),  \frac{1}{10}(-5 \pm 3 i \sqrt{15}).
\end{align*}
 Proceeding as for the diagonal components we obtain the desired estimates for the 
 non-diagonal ones up to an $\epsilon$ as well. Now we can do another loop as for instance in Section 5 of \cite{EN} to get rid of the epsilon and thus to obtain the desired estimates.
\end{proof}

Based on the previous lemma we can obtain estimates of other variables.
\begin{lem}\label{basic2}
Consider initial data which correspond to a massless solution of the Einstein-Vlasov system with Bianchi I symmetry which expands initially, i.e. $H(t_0)>0$ and satisfies the conditions of Lemma \ref{basic}, then
\begin{align}
   & H=\frac12t^{-1}(1+O(\epsilon^2 t^{-\frac12})),\\
    &g_{ab}=t (G_{ab}+O(\epsilon t^{-\frac14})),\\
    &g^{ab}=t^{-1}(G^{ab}+O(\epsilon t^{-\frac14})),
\end{align}
where $G_{ab}$ and $G^{ab}$ are constant matrices and $\epsilon$ a small quantity.
\end{lem}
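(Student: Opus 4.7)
The plan is to convert the exponential $\tau$-decay from Lemma \ref{basic} into polynomial $t$-decay, and then to integrate the evolution equations for $H$, $g_{ab}$ and $g^{ab}$ directly. The central input is $F = \Sigma^a_b\Sigma^b_a = O(\epsilon^2 e^{-\tau})$, combined with the Hubble equation \eqref{evH}, which reads $\tfrac{d}{dt}(1/H) = 2 + F/6$.

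First I would reconcile $\tau$ and $t$. Since $F \geq 0$ and $F \leq C\epsilon^2$ uniformly, the inclusion $\tfrac{d}{dt}(1/H) \in [2,\, 2+C\epsilon^2]$ gives a rough bound $H = \Theta(1/t)$; integrating $d\tau = H\,dt$ then yields $\tau(t) = \tfrac{1}{2}\ln t + O(1)$ and hence $e^{-\tau(t)} = O(t^{-1/2})$. Inserting this into Lemma \ref{basic} gives $F = O(\epsilon^2 t^{-1/2})$, and a second integration of $\tfrac{d}{dt}(1/H) = 2 + O(\epsilon^2 t^{-1/2})$ from $t_0$ to $t$ (absorbing the constant of integration into a shift of the time origin, or into the remainder for large $t$) produces $1/H = 2t(1 + O(\epsilon^2 t^{-1/2}))$, from which the claimed expansion for $H$ follows.

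Next I would establish the metric estimate by rewriting $\dot{g}_{ab} = 2k_{ab} = 2\sigma_{ab} + 2Hg_{ab}$ in the form
\begin{equation*}
\frac{d}{dt}\!\left(\frac{g_{ab}}{t}\right) = \frac{2\sigma_{ab}}{t} + \frac{(2Ht-1)}{t^2}\,g_{ab}.
\end{equation*}
Under the bootstrap $g_{ab} = O(t)$, the identity $\sigma^c_b = H\Sigma^c_b = O(\epsilon\, t^{-5/4})$ gives $\sigma_{ab} = g_{ac}\sigma^c_b = O(\epsilon\, t^{-1/4})$, so the first term on the right is $O(\epsilon\, t^{-5/4})$; and $2Ht - 1 = O(\epsilon^2 t^{-1/2})$ from the first step makes the second term $O(\epsilon^2 t^{-3/2})$. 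Both are integrable on $[t_0, \infty)$, so $g_{ab}/t$ converges to a constant symmetric matrix $G_{ab}$ with a tail bound of size $O(\epsilon\, t^{-1/4})$, which in particular closes the bootstrap for small $\epsilon$. The analogous argument for $t g^{ab}$, based on $\dot{g}^{ab} = -2k^{ab}$, gives convergence to a constant matrix $G^{ab}$ with the same error rate; taking $t \to \infty$ in $g^{ab} g_{bc} = \delta^a_c$ then identifies $G^{ab}$ as the inverse of $G_{ab}$.

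The chief technical obstacle is the interlocking structure of these estimates: the bound on $\sigma_{ab}$ requires $g_{ab} = O(t)$, the refined estimate for $g_{ab}/t$ uses the sharp form $2Ht - 1 = O(\epsilon^2 t^{-1/2})$, and the estimate on $H$ itself depends on the conversion $e^{-\tau} \sim t^{-1/2}$, which in turn needs the leading rate of $H$. The cleanest way to break this circularity is a two-stage bootstrap: first establish $H = \Theta(1/t)$ and $g_{ab} = O(t)$ as coarse a priori bounds valid for small $\epsilon$, then sharpen everything simultaneously using the explicit decay rates from Lemma \ref{basic}.
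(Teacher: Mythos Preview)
Your approach is essentially the paper's: iterate the $\tau$--$t$ conversion through \eqref{evH} to pin down $H$, then integrate the metric evolution (the paper defers this last step to \cite{LN}, but your explicit treatment of $\frac{d}{dt}(g_{ab}/t)$ is the same computation). There is one genuine slip, however. From $\frac{d}{dt}(1/H)\in[2,\,2+C\epsilon^2]$ you only obtain $H\in\bigl[\frac{1}{(2+C\epsilon^2)t+c},\,\frac{1}{2t+c'}\bigr]$, and integrating $d\tau=H\,dt$ gives $\tau\ge \frac{1}{2+C\epsilon^2}\ln t+O(1)$, hence merely $e^{-\tau}=O\bigl(t^{-1/(2+C\epsilon^2)}\bigr)$, not $O(t^{-1/2})$. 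The conclusion $\tau(t)=\tfrac12\ln t+O(1)$ requires $H-\tfrac{1}{2t}$ to be integrable, which your coarse stage does not yet provide; the $\epsilon^2$ defect in the exponent persists through your ``second integration'' and contaminates the final $H$--estimate.

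The fix is exactly what the paper does: one additional pass. After your second integration you have $H=\tfrac{1}{2t}\bigl(1+O(\epsilon^2 t^{-1/2+C'\epsilon^2})\bigr)$; feeding this back into $d\tau=H\,dt$ now genuinely yields $\tau=\tfrac12\ln t+O(1)$ (since $t^{-3/2+C'\epsilon^2}$ is integrable for small $\epsilon$), and a third integration of \eqref{evH} then gives the claimed $H=\tfrac12 t^{-1}(1+O(\epsilon^2 t^{-1/2}))$. The paper starts instead from the universal bound $\frac{d}{dt}(1/H)\le 3$ to get $e^{-\tau}\le Ct^{-1/3}$, then iterates twice; your sharper starting point is fine but does not save an iteration. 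Adjust your ``two-stage bootstrap'' to three stages and the argument goes through.
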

\begin{proof}
From \eqref{evH} we obtain
\begin{align}
    \frac{1}{H}\leq 3(t-t_0)+\frac{1}{H(t_0)}.
\end{align}
Using \eqref{deftau} and setting $t_0=1/(2H(t_0))$ we obtain
\begin{align}
\frac{dt}{d\tau} \leq 3t
\end{align}
Integrating and doing some computations we obtain
\begin{align}\label{taut}
e^{-\tau}\leq C t^{-\frac13}.
\end{align}
From \eqref{evH} we also obtain
\begin{align}
H=\frac12t^{-1}\frac{1}{1+\frac12 t^{-1}I},
\end{align}
with
\begin{align}
I= \frac16 \int_{t_0}^t \Sigma_a^b \Sigma_b^a(s) ds.
\end{align}
Using now \eqref{taut} we obtain
\begin{align}
I \leq C\epsilon^2 t^{\frac23}.
\end{align}
As a result we obtain the estimate for $H$
\begin{align}
H=\frac12 t^{-1}\left(1+O\left(\epsilon^2 t^{-\frac13}\right)\right).
\end{align}
Using this in \eqref{deftau} again we obtain
\begin{align*}
e^{-\tau} \leq Ct^{-\frac12}.
\end{align*}
Doing another loop of the computations we obtain the desired estimate of $H$.
To obtain the estimate of $g_{ab}$ we do the same computations as in \cite{LN} concerning the estimate of the metric.
Note that
\begin{align}
    &\vert H-\frac12 t^{-1} \vert \leq C \epsilon^2 t^{-\frac32}\\
    &(\sigma_{ab}\sigma^{ab})^{\frac12}\leq C\epsilon t^{-\frac54},
\end{align}
and the estimate for $g_{ab}$ and $g^{ab}$ follow by the same computations as in \cite{LN}.
\end{proof}
Let us summarize our results in the following theorem:
\begin{thm}
Consider initial data which correspond to a massless solution of the Einstein-Vlasov system with Bianchi I symmetry which expands initially, i.e. $H(t_0)>0$. Suppose $\Sigma^a_b(\tau_0)=\epsilon_\Sigma$, $\tilde{w}_i^j(\tau_0)=\epsilon_w$, and $\tilde{\xi}_{ic}^{jd}(\tau_0)=\epsilon_\xi$  are small initially. Then the following estimates hold:
\begin{align}
&\Sigma^j_i=O(\epsilon t^{-\frac14}),\\
&\tilde{w}^i_j=O(\epsilon t^{-\frac14}),\\
   & H=\frac12t^{-1}(1+O(\epsilon^2 t^{-\frac12})),\\
    &g_{ab}=t (G_{ab}+O(\epsilon t^{-\frac14})),\\
    &g^{ab}=t^{-1}(G^{ab}+O(\epsilon t^{-\frac14})),
\end{align}
where $\epsilon$ only depends on the initial data.
\end{thm}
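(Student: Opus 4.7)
My plan is to treat the theorem as an assembly of Lemma \ref{basic} and Lemma \ref{basic2}: the decay estimates for $\Sigma^a_b$ and $\tilde{w}^i_j$ come from Lemma \ref{basic}, the Hubble and metric estimates come directly from Lemma \ref{basic2}, and the only genuinely new step is translating the exponential $\tau$-decay of Lemma \ref{basic} into the power-law $t$-decay asserted in the theorem. So I will begin by invoking Lemma \ref{basic} under the stated smallness hypotheses to obtain
\begin{align*}
\Sigma^a_b = O(\epsilon e^{-\tau/2}), \qquad \tilde{w}_i^j = O(\epsilon e^{-\tau/2}),
\end{align*}
with $\epsilon$ controlled only by $\epsilon_\Sigma, \epsilon_w$ (and with the initial $\tilde\xi$ absorbed into this constant, exactly as in the bootstrap of Lemma \ref{basic}).

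Next I will convert between $\tau$ and $t$. The definition \eqref{deftau} together with the Hubble estimate $H = \frac{1}{2}t^{-1}(1+O(\epsilon^2 t^{-1/2}))$ from Lemma \ref{basic2} gives $dt/d\tau = H^{-1} \sim 2t$, and integration yields the bound $e^{-\tau}\leq C t^{-1/2}$ already derived at the end of the proof of Lemma \ref{basic2}. Taking the square root, $e^{-\tau/2}\leq C t^{-1/4}$, and substituting into the estimates from Lemma \ref{basic} gives
\begin{align*}
\Sigma^j_i = O(\epsilon t^{-1/4}), \qquad \tilde{w}^i_j = O(\epsilon t^{-1/4}),
\end{align*}
which is the first pair of asserted bounds. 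The remaining three estimates for $H$, $g_{ab}$ and $g^{ab}$ are stated verbatim in the conclusion of Lemma \ref{basic2}, so once we have verified that its hypotheses are satisfied (which they are, since we are assuming exactly the bootstrap conclusions of Lemma \ref{basic} together with $H(t_0)>0$), these bounds follow immediately.

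I do not expect a serious obstacle here: the substantive analytic work has already been carried out in the two preceding lemmas. The one point requiring a little care is bookkeeping of the small constant $\epsilon$ across the three estimates, so that a single $\epsilon$ depending only on the initial data controls every term in the statement; this is ensured because Lemma \ref{basic} already fixes $\epsilon$ in terms of $\epsilon_\Sigma, \epsilon_w$ (absorbing $\epsilon_\xi$ via \eqref{bound3}), and Lemma \ref{basic2} propagates that same $\epsilon$ into the Hubble and metric estimates. The only verification worth spelling out is that the improved relation $e^{-\tau}\leq Ct^{-1/2}$ (rather than the preliminary $e^{-\tau}\leq Ct^{-1/3}$) is what yields the advertised $t^{-1/4}$ rate, which is indeed the optimal rate compatible with the spectral analysis of the linearised diagonal and non-diagonal systems appearing in Lemma \ref{basic}.
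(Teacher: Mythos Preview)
Your proposal is correct and matches the paper's approach exactly: the paper introduces the theorem with the phrase ``Let us summarize our results in the following theorem'' and gives no separate proof, so it is indeed just the assembly of Lemma~\ref{basic} and Lemma~\ref{basic2} that you describe. Your explicit spelling-out of the $\tau\mapsto t$ conversion via $e^{-\tau}\leq Ct^{-1/2}$ (obtained in the second loop of the proof of Lemma~\ref{basic2}) is precisely the missing glue the paper leaves implicit.
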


\section{Acknowledgements}
The authors thank Prof John Stalker of Trinity College Dublin for helpful discussions. H.L. and P.T. acknowledge financial support from ICMAT. H.L. was supported by the Basic Science Research Program through the National Research Foundation of Korea (NRF) funded by the Ministry of Science, ICT \& Future Planning (NRF-2018R1A1A1A05078275). EN acknowledges support from grants MTM2017-86875-C3-1-R AEI/ FEDER, UE and RTC-2017-6593-7 AEI/FEDER, UE.


\begin{thebibliography}{10}
\bibitem{and}
H. Andr\'easson.
\newblock{The Einstein-Vlasov System/Kinetic Theory.}
\newblock{\em Living Reviews in Relativity}, https://doi.org/10.12942/lrr-2011-4. 

\bibitem{A3}K. Anguige.
\newblock{Isotropic cosmological singularities. III. The Cauchy problem for the inhomogeneous conformal Einstein-Vlasov equations.}
\newblock{\em Ann. Physics} 282 395--419, 2000.

\bibitem{AT2} K. Anguige and K. P. Tod.
\newblock{Isotropic cosmological singularities. II. The Einstein-Vlasov system.}
\newblock{\em Ann. Physics} 276 294--320, 1999.


\bibitem{BFH}
H.Barzegar, D. Fajman, G. Hei{\ss}el.
\newblock{On slowly expanding spacetimes}
\newblock{arXiv:1904.13290}
 
\bibitem{fjs} 
D. Fajman, J. Joudioux and J. Smulevici.
\newblock{The Stability of the Minkowski space for the Einstein-Vlasov system.}
\newblock{arXiv:1707.06141}


\bibitem{FH}
D.~Fajman and G.~Hei{\ss}el.
\newblock{Kantowski-Sachs cosmology with Vlasov matter}
\newblock {\em Class. Quant. Grav.} 36: 135002, 2019.

\bibitem{HF}
H. Friedrich.
\newblock{On the existence of n-geodesically complete or future complete solutions of Einstein’s field equations with smooth asymptotic structure}
\newblock{\em Comm.Math. Phys.} 107 587--609, 1986.

\newblock{and}

\newblock{On the global existence and the asymptotic behavior of solutions to
the Einstein-Maxwell-Yang-Mills equations,}
\newblock{\em J. Diff. Geom.} 34, 275--345, 1991.

\bibitem{HU1}
J.M.~Heinzle and C.~Uggla.
\newblock {Dynamics of the spatially homogeneous Bianchi type I Einstein-Vlasov equations}.
\newblock {\em Class. Quant. Grav.} 23, 3463--3490, 2006.



\bibitem{jtk}
J. Joudioux, M. Thaller and J.A. Valiente Kroon.
\newblock{The Conformal Einstein Field Equations with Massless Vlasov Matter.}
\newblock{arXiv:1903.12251.}

\bibitem{HL} H.~Lee.
\newblock{Asymptotic behaviour of the Einstein-Vlasov system with a positive cosmological constant.}
\newblock{\em Math. Proc. Cambridge Philos. Soc.} 137, 495--509, 2004. 

\bibitem{LN}
H.~Lee and E.~Nungesser.
\newblock {Future global existence and asymptotic behaviour of solutions to the
  Einstein-Boltzmann system with Bianchi I symmetry}.
\newblock {\em J. Differ. Equations} 262, 11:5425-5467,2017.

\bibitem{LN2}
H.~Lee and E.~Nungesser.
\newblock {Self-similarity breaking of cosmological solutions with collisionless matter}.
\newblock {\em Ann. Henri Poincare} 19, 7:2137–-2155, 2018.

\bibitem{LT}
H. Lindblad and M.Taylor.
\newblock{Global stability of Minkowski space for the Einstein--Vlasov system in the harmonic gauge.}
\newblock{arXiv:1707.06079} 



\bibitem{EN}
E.~Nungesser.
\newblock {Isotropization of non-diagonal Bianchi I spacetimes with
  collisionless matter at late times assuming small data}.
\newblock {\em Class. Quant. Grav.} 27:235025, 2010.


\bibitem{E4}
E.~Nungesser.
\newblock {Future non-linear stability for solutions of the Einstein-Vlasov system of Bianchi types II and VI$_0$}.
\newblock {\em J. Math. Phys.} 53, 102503, 2012.



\bibitem{R0}
A.D. Rendall.
 \newblock{On the choice of matter model in general relativity.}
 \newblock{pp 94--102 in Approaches to numerical relativity (Southampton, 1991)}
 \newblock{Cambridge Univ. Press, Cambridge, 1992.} 

\bibitem{R1}
A. D. Rendall.
\newblock{Global properties of locally spatially homogeneous cosmological models with matter}
\emph{Math. Proc. Camb. Phil. Soc.} 118: 511--526, 1995.

\bibitem{R} 
A.D.Rendall.
\newblock{The Initial singularity in solutions of the Einstein-Vlasov system of Bianchi type I}
\newblock{\em J. Math. Phys.} 37:438--451, 1996.


\bibitem{RT}
A.D. Rendall and K.P. Tod.
\newblock {Dynamics of spatially homogeneous solutions of the Einstein-Vlasov equations which are locally rotationally symmetric}
\emph{\em Class. Quant. Grav.} 16: 1705--1726, 1998.

     
\bibitem{Ring}
H.~Ringstr{\"{o}}m.
\newblock {\em {On the Topology and Future Stability of the Universe}}.
\newblock Oxford University Press, Oxford, 2013.

\bibitem{tay}
M. Taylor .
\newblock{The global nonlinear stability of Minkowski space for the massless Einstein--Vlasov system.}
\newblock{arXiv:1602.02611} 

\bibitem{T}
 P. Tod.
 \newblock{Isotropic cosmological singularities in spatially homogeneous models with a cosmological constant.}
 \newblock{\em Class. Quant. Grav.} 24, 2415--2432, (2007). 

\end{thebibliography}
\end{document}